\documentclass[acmsmall,screen,nonacm]{acmart}
\AtBeginDocument{%
  }

\setcopyright{none}
\copyrightyear{2026}
\acmYear{2026}
\acmConference[PLDI '26]{EGRAPHS 2026}{June 15--19,
  2026}{Boulder, CO}

\usepackage{acronym}
\usepackage{comment}
\usepackage{todonotes}
\usepackage[ruled, noend]{algorithm2e}
\usepackage{mathtools}
\usepackage{graphicx}
\usepackage{booktabs}
\usepackage{pgfplots}
\pgfplotsset{compat=newest}
\usepackage{standalone}

\usepackage{mathpartir} 
\usepackage{listings}

\begin{document}

\newacro{KBO}{Knuth-Bendix Order}
\newacro{KBC}{Knuth-Bendix Completion}
\newacro{AC}{associativity and commutativity}
\title{Associativity and Commutativity in Equality Saturation}

\author{Tarik Rosin}
\email{taro00001@uni-saarland.de}
\orcid{0009-0005-8273-7907}
\affiliation{
	\institution{Saarland University}
	\city{Saarbrücken}
	\country{Germany}
}

\author{Marcel Ullrich}
\email{ullrich@cs.uni-saarland.de}
\orcid{0009-0006-0127-9623}
\affiliation{
	\institution{Saarland University}
	\city{Saarbrücken}
	\country{Germany}
}

\author{Sebastian Hack}
\email{hack@uni-saarland.de}
\orcid{0000-0002-3387-2134}
\affiliation{
	\institution{Saarland University}
	\city{Saarbrücken}
	\country{Germany}
}

\begin{abstract}
	Equality saturation is a promising technique for program optimization which sidesteps the phase ordering problem.
	However, current e-graph implementations grow exponentially large, even for simple examples.
	Many practical applications involve associative and commutative (AC) operators.
	We present an extension of relational e-matching that handles AC operators natively by storing terms as multisets.
	Preliminary results show that equality saturation modulo AC uses asymptotically less memory in certain cases.
\end{abstract}



\keywords{E-Graph, Equality Saturation, E-matching}

\received{17 April 2026}

\maketitle

\section{Motivation}
\label{sec:motivation}

In this paper we study the effects of \ac{AC} rules on equality saturation. 
Equality saturation is a technique of exploring equivalent terms by rewriting,
while storing deduplicated terms in an e-graph.

When used for program optimization, equality saturation allows for good flexibility.
It does not require an ordering on terms or another kind of canonicalization strategy.
Rules can be bidirectional without the danger of non-productive cyclic rewriting.
This is because the e-graph acts as a compact database of all explored programs.
Equality saturation can be interrupted at any time to extract the best program discovered so far.

However, e-graphs tend to explode in size, even for small programs and few rewrite rules.
The e-graph is explicitly storing the entire search space.
Two prominent rewrite rules, \emph{associativity} and \emph{commutativity}, are known to cause combinatorial blowup.
\begin{align}
	(x + y) + z & = x + (y + z) \tag{A} \\
	x + y       & = y + x       \tag{C}
\end{align}

\ac{AC} is ubiquitous in applications of equality saturation such as
hardware synthesis \citep{10549954,10.1145/3582016.3582032},
program optimization \citep{nandi2021ruler,panchekha2015herbie},
and theorem proving \citep{demoura2007ematchsmt}.

\begin{example}
	Consider only the associative law and $x \cdot 0 = 0$ with the initial term $a \cdot 0$.
	First, equality saturation determines that $a \cdot 0 = 0$.
	In the next iteration, the associative pattern would match on $a \cdot (a \cdot 0)$ by backwards substituting $a \cdot 0$ for $0$ in the initial term.
	This produces the new term $(a \cdot a) \cdot 0$ and the same process repeats indefinitely.
\end{example}

This paper reports on an extension of e-graphs to mitigate the blowup caused by \ac{AC} rules.
We first present bounds on the AC blow-up of standard e-graphs.
We then show how to extend the database view of e-graphs to support efficient e-matching and leverage an existing algorithm for AC congruence closure to support efficient rebuilding.

\subsection{Bounds}

In traditional equality saturation the term $x_1 + x_2 + \ldots + x_n$ produces a saturated e-graph
with exponentially many e-nodes and e-classes.
In this section, we give bounds on the number of e-nodes and e-classes in the presence of \ac{AC} rules.

\begin{theorem}
	For a root term $t_r = x_1 + x_2 + \dots + x_n$, the number of e-classes in an AC-saturated e-graph is given by the function $2^n - 1$.
\end{theorem}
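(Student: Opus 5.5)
The plan is to set up a bijection between the e-classes of the AC-saturated e-graph and the nonempty subsets of $\{x_1,\dots,x_n\}$; there are exactly $2^n-1$ of these. We assume the $x_i$ are pairwise distinct leaves and that the only rules are (A) and (C). We then characterize the saturated e-graph as follows: it contains exactly the terms $s$ built from $+$ and leaves such that $s$ is a subterm of some term AC-equivalent to $t_r$. Two such terms lie in the same e-class iff they are AC-equivalent.

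\textbf{Step 1: the e-classes are exactly the AC-equivalence classes of represented terms.} Soundness of rewriting gives one direction: every merge performed by (A) or (C) relates AC-equal terms, and congruence closure preserves AC-equality. So two terms in one e-class are AC-equivalent. For the converse we use saturation. Any two AC-equivalent terms are connected by a finite chain of single (A)/(C) steps applied at some position. In a saturated e-graph each such step is already reflected: the instance of the rule matched in the e-class of the subterm and merged it with the rewritten subterm, and congruence propagates this to the enclosing term. Induction on the chain length then shows AC-equivalent represented terms share an e-class. We also have to check that the rewritten terms are represented at all. This holds because applying (A) or (C) to a represented term only creates nodes whose children are e-classes of subterms of represented terms.

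\textbf{Step 2: identify the AC-classes with subsets.} Because $+$ is AC, the class of a term $s$ is determined by the multiset of leaves it contains. Since every represented $s$ is a subterm of a term AC-equal to $t_r$, that multiset is a subset $S\subseteq\{x_1,\dots,x_n\}$, each leaf occurring at most once, and $S$ is nonempty. This gives an injective map from e-classes to nonempty subsets. For surjectivity, take any nonempty $S$ and let $\bar S$ be its complement. If $S$ is the whole set, the class of $t_r$ works. Otherwise form the term $(\sum S)+(\sum \bar S)$, with each sum bracketed arbitrarily. This term is AC-equal to $t_r$, so by Step 1 its root node lies in the root e-class. Its left child therefore appears as an e-class whose leaf set is $S$. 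The singletons $S=\{x_i\}$ are just the leaf classes.

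The main obstacle is Step 1, specifically pinning down what ``AC-saturated'' means formally. The paper's Example shows that an e-graph can, in general, grow beyond the subterms of $t_r$'s AC-class; there the growth comes from an absorbing rule interacting with (A). So I would first state explicitly that only (A) and (C) are present. Under that assumption, the key lemma to prove is that applying (A) or (C) never introduces an e-class whose leaf multiset is not a subset of a sub-multiset of the root's leaves. I would do this by induction over rule applications, showing the invariant that every e-class has a leaf set that is a nonempty subset of $\{x_1,\dots,x_n\}$, with disjoint children under every $+$-node. The invariant is preserved because (A) creates the new class $y+z$ from children with disjoint leaf sets. Together with surjectivity, this gives exactly $2^n-1$ e-classes.
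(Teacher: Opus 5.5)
Your proof is correct and uses the same idea as the paper: a bijection between e-classes and nonempty subsets of $\{x_1,\dots,x_n\}$, where a class is determined by its variables regardless of order and bracketing. The paper asserts this in two lines; you also justify what it takes for granted. That is, you show that e-classes coincide with AC-classes of represented terms (soundness plus saturation), that every subset is hit (by splitting $t_r$ as $(\sum S)+(\sum \bar S)$), and that no leaf repeats (the disjoint-children invariant). You also state explicitly that the $x_i$ are distinct and that only (A) and (C) are present.
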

\begin{proof}
	Each equivalence class represents a single variable or a sum of distinct variables, regardless of order and brackets.
	A simple counting argument is that each e-class corresponds to a non-empty subset of variables
	and hence $| \mathcal{P}(\Sigma_n) \setminus \{ \emptyset \} | = 2^n - 1 $.
\end{proof}

\begin{theorem}\label{thm:ac-enodes}
	For a root term $t_r = x_1 + x_2 + \dots + x_n$, the number of e-nodes in an AC-saturated e-graph is given by the function $3^n - 2^{n+1} + n + 1$.
\end{theorem}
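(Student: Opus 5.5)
Building on the description of e-classes from the previous theorem, I would count e-nodes class by class. By that theorem, the e-classes of the saturated e-graph correspond bijectively to the non-empty subsets $S \subseteq \{x_1,\dots,x_n\}$: class $c_S$ represents the sum of the variables in $S$, in any order and bracketing. The plan is to describe exactly which e-nodes live in each $c_S$ and then sum over all $S$.

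\emph{Singleton classes.} If $|S| = 1$, the class $c_S$ contains only the leaf e-node $x_i$. Saturation never merges a variable with a proper sum, since AC-equal terms contain the same multiset of variables. This contributes $n$ e-nodes in total.

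\emph{Classes with $|S| = k \ge 2$.} Every e-node in $c_S$ is a binary node $c_A + c_B$ with $A, B$ non-empty and $A \cup B = S$. The sets must also be disjoint, because in $t_r$ each variable occurs once and AC preserves the variable multiset. Conversely, every ordered split $(A,B)$ arises. By associativity, any bracketing of the elements of $S$ in any order (the order coming from commutativity) is reachable from the sub-sum of $t_r$ over $S$. Choosing a bracketing whose top split is $A \mid B$ then shows $c_A + c_B \in c_S$. Since hash-consing identifies e-nodes exactly by operator and child classes, the ordered splits $(A,B)$ and $(B,A)$ are distinct e-nodes when $A \ne B$, and $A = B$ is impossible. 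So $c_S$ has exactly $2^k - 2$ e-nodes: the ordered pairs of non-empty complementary subsets.

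\emph{Summation.} The total is
\[
n + \sum_{k=2}^{n} \binom{n}{k}\,(2^k - 2)
= n + \sum_{k=0}^{n}\binom{n}{k}(2^k-2) - \bigl[(1-2) + n(2-2)\bigr]
= n + (3^n - 2^{n+1}) + 1,
\]
which gives $3^n - 2^{n+1} + n + 1$ by the binomial theorem. The $k=0$ term contributes $-1$ and the $k=1$ term contributes $0$.

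The main obstacle is the reachability claim: that saturation with (A) and (C) really produces \emph{every} ordered split in every class, and nothing else. I would argue this by induction on $|S|$. The inclusion direction follows because A and C generate all bracketings and orderings, and each rewrite adds the corresponding nodes to the e-graph. The exclusion direction follows from the invariant that every class has a well-defined variable set preserved by rewrites and merges. One subtlety to state explicitly is that, for the equation to hold, the root term's sub-sums must be indexed by sets. This holds here because the $x_i$ are distinct.
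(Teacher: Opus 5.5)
Your proof is correct and follows the paper's argument: index the classes by non-empty subsets, count the $2^k-2$ ordered splits for each $k$-element class, sum with the binomial theorem, and add the $n$ leaves. Your extra justification (the variable-multiset invariant for exclusion, closure of the saturated e-graph under A and C for inclusion) fills in details the paper leaves implicit. One small fix: for a non-contiguous $S$ there is no literal sub-sum of $t_r$ over $S$, so the inclusion argument should start from the class $c_S$ guaranteed by the previous theorem instead.
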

\begin{proof}
	Consider an arbitrary non-leaf e-node $t_k = c_1 + c_2$ which sums together $k \geqslant 2$ variables.
	Observe that the chosen variables can be distributed among $c_1, c_2$ in precisely $2^k - 2$ many ways.
	Thus, we can quantify the total number of e-nodes by counting each set of variables multiplied by the number of possible e-nodes.

	$$\sum_{k=2}^{n} \binom{n}{k} (2^k - 2) = 1 + \sum_{k=0}^{n} \binom{n}{k} 2^k 1^{n - k} - 2 \cdot 2^n = 3^n - 2^{n+1} + 1$$

	Adding the leaf e-nodes for the $n$ variables yields $3^n - 2^{n+1} + n + 1$.
\end{proof}

\section{Approach}
\label{sec:algorithm}

Our algorithm builds on previous work by \citet{zhang2022relational}, which introduced relational e-matching.
The modification we propose is adding specific indexed relations for \ac{AC} operators.
Before discussing our approach, we briefly recap the framework of existing relational e-matching approaches.

\subsection{Relational E-Matching}

In relational e-matching, the e-graph is a database with one relation per operator
and the rewrite patterns are translated to conjunctive queries.
We use the \emph{generic join} algorithm due to its simplicity.

\begin{example}
	Take the pattern $x + (-x)$ as an example.
	The pattern is lowered to the conjunctive query $Q(r, x) = R_\texttt{add}(r, x, y) \land R_\texttt{neg}(y, x)$.
	The first argument is always the e-class id, and the remaining arguments define the e-node.
	The variable $r$ is the patterns root, the e-class id of the matched term.
	The variable $y$ is the e-class id of $(-x)$ and was added during lowering.
	For an arbitrary variable ordering, we can compile the query to a concrete algorithm.
	\begin{algorithm}
		\SetKw{Yield}{yield}
		\SetCommentSty{medium}
		\caption{Generic Join for the pattern $x + (-x)$ with variable ordering $x < y < r$}
		$X \gets \pi_2 R_\texttt{add} \cap \pi_2 R_\texttt{neg}$
		\tcp*[f]{Collect all candidates of x by intersecting the two relations} \\
		\For{$x \in X$}{
			$Y \gets \pi_3 \sigma_{2=x} R_\texttt{add} \cap \pi_1 \sigma_{2=x} R_\texttt{neg}$
			\tcp*[f]{Select x and then collect all y's given x} \\
			\For{$y \in Y$}{
				$R \gets \pi_1 \sigma_{3=y} \sigma_{2=x} R_\texttt{add}$
				\tcp*[f]{Collect r given x and y} \\
				\For{$r \in R$}{
					\Yield{$(r, x)$} \tcp*[f]{All candidates survived, so we output this match}
				}
			}
		}
	\end{algorithm}
\end{example}

The generic join algorithm proceeds one variable at a time and determines all values this variable can take.
Next, it branches by fixing the variable to a concrete value and moves on to the next variable.
In order to make this procedure efficient, trie indices are built before the matching phase.

\begin{lstlisting}[basicstyle=\ttfamily\footnotesize]
pub trait TrieCursor {
    // advance the cursor down along the `key` edge
    fn select(&mut self, key: Id);
    fn unselect(&mut self);

    // tests whether the current level has `key`
    fn contains(&self, key: Id);

    // returns an iterator over all keys on the current level
    fn keys(&self) -> Iterator<Id>;
}
\end{lstlisting}

These four functions are sufficient to implement an index that supports generic join.
If all functions run in worst-case logarithmic time, the resulting join algorithm is considered worst-case optimal.
Our novelty lies in having a separate index structure for AC operators.

\subsection{E-Graphs modulo AC}

In equality saturation modulo AC, we do not run the associative and commutative patterns during the e-matching phase.
Instead, these patterns are structural properties of AC relations.

An AC relation is a list of tuples, where tuples are allowed to have variadic arity.
Each tuple stores an e-class id and a multiset of arguments of a term, e.g. $+(\{ x_1, x_2, x_3 \}) \in c$.
When inserting a term into the e-graph, AC operators are flattened into a multiset.
Two terms that are AC-equivalent now share the same multiset e-node.
\newline

We introduce two more concepts: \emph{explicit} and \emph{implicit} representation.
An e-node is explicitly represented if the e-node is stored as a tuple in one of the relations.
In contrast, an e-node may be implicitly represented if it is a subterm of an explicitly represented e-node.
The distinction is useful because it allows us to only store a fraction of the total e-nodes.

We now describe how to implement the trie cursor interface of such relations.
An incomplete but efficient scheme is given by using the multiset operations to implement the interface.
Testing containment is the same as testing whether the multiset contains the key.
Iterating over the keys becomes iterating over the unique elements of the multiset.
Selecting a certain key is removing said key once from the multiset and reinserting it for the unselect operation.
To support this kind of indexing, we need to focus on a single multiset at a time.
Therefore, we add an auxiliary variable $t$ to each AC operator when compiling the pattern.
The variable $t$ selects a multiset and is always first in the variable ordering.

\begin{example}
	The pattern $x + (-x)$ will now be compiled to $Q(r, x) = R_\texttt{add}(r, x, y, t) \land R_\texttt{neg}(y, x)$.
	Here $t$ ranges over all multisets in the relation and we enforce it to be first in the variable ordering.
	\begin{algorithm}
		\SetKw{Yield}{yield}
		\SetCommentSty{medium}
		\caption{Generic Join for the pattern $x + (-x)$ with variable ordering $t < x < y < r$}
		$T \gets \pi_4 R_\texttt{add}$
		\tcp*[f]{t selects a multiset to simplify matching in the inner loops} \\
		\For{$t \in T$}{
			$X \gets \pi_2 \sigma_{4=t} R_\texttt{add} \cap \pi_2 R_\texttt{neg}$
			\tcp*[f]{$R_\texttt{add}$ now has a fixed multiset to search in} \\
			\For{$y \in X$}{
				$Y \gets \pi_3 \sigma_{2=x} \sigma_{4=t} R_\texttt{add} \cap \pi_1 \sigma_{2=x} R_\texttt{neg}$ \\
				\For{$x \in Y$}{
					$R \gets \pi_1 \sigma_{3=y} \sigma_{2=x} \sigma_{4=t} R_\texttt{add}$ \\
					\For{$r \in R$}{
						\Yield{$(r, x)$}
					}
				}
			}
		}
	\end{algorithm}
\end{example}

This approach is not complete because each variable can only be bound to an e-class id but not to a subset of e-class ids.
A counterexample would be trying to match the above pattern on $a + b + -(a + b)$.
When the outer sum is flattened, the pattern variable $x$ may not be bound to $a + b$.
Further research is required to see how significant the loss of completeness is.

\subsection{Rebuilding}

Rebuilding refers to restoring key invariants of the E-Graph.
The most significant such invariant is \emph{congruence closure} $a = b \implies f(a) = f(b)$,
meaning two terms are equal if their constituent terms are equal.

In equality saturation we might discover that $a = b$ via a rewrite rule.
The process of rebuilding then traverses the relations to find parents, which are now also equal.
There is a long history of congruence closure algorithms and in particular AC congruence closure algorithms.
In the literature these are often presented as completion-like procedures that deal with sets of ground equations.
We can understand the e-graph as a set of ground rewrite rules by viewing each e-node as a term that rewrites to its e-class id.
For example, let $f(a, b, c) \in u$ be an e-node contained in the e-class $u$.
In the completion-based presentation, this is indicated by a rule $f(a, b, c) \rightarrow u$
where e-class ids are additional constants distinct from the signature.

In light of this, two terms are proven equivalent by the e-graph if they rewrite to the same canonical constant.
The task of rebuilding is to restore convergence of the set of ground rewrite rules that constitute the e-graph.
This setting extends to AC symbols, namely by rewriting modulo AC.

We use the modular AC congruence closure algorithm by \citet{kapur2021accongruence}.
The algorithm is modular in the sense that each relation is being rebuilt separately.
If one relation finds a new equivalence during rebuilding, it updates the union-find data structure,
causing the other relations to rebuild with the new equivalence.
For a single AC relation, the algorithm works as follows:

\begin{enumerate}
	\item For an AC relation, we mark all e-nodes in the relation as dirty.
	\item Take an arbitrary dirty e-node $f(M) \rightarrow c$ and canonicalize it using the union-find.
	\item Next, simplify $f(M) \rightarrow c$ using any clean e-nodes.
	\item Collect all critical pairs between $f(M) \rightarrow c$ and any clean e-node.
	      Non-trivial critical pairs create new e-nodes which get added to the relation and marked as dirty.
	\item Mark $f(M) \rightarrow c$ as clean and reduce all clean e-nodes with it.
	\item If no more e-nodes are dirty we are done, otherwise repeat from step 2.
\end{enumerate}

Kapur's rebuilding algorithm produces a reduced convergent rewrite system.
The e-graph will contain only minimal e-nodes such that all terms are still represented.

\section{Preliminary Results}
\label{sec:results}

We evaluate how equality saturation modulo AC scales compared to egg \citep{willsey2021egg} as baseline implementation.
We insert a term into an empty e-graph and measure the size after saturating.
The theory is an abelian group extended with a homomorphism $h(ab) = h(a) \cdot h(b)$.
For equality saturation modulo AC we only count the explicitly represented e-nodes,
because implicitly represented e-nodes require no memory.
Figure \ref{fig:egraph-sizes} displays the number of e-nodes and e-classes on small instances.
Table \ref{tab:scaling-laws} summarizes the asymptotic scaling behaviour of equality saturation modulo AC.
Our results demonstrate a significant reduction in the size of an e-graph compared to standard equality saturation.





\begin{figure}[h]
	\centering
	\caption{E-graph sizes for selected terms comparing \textup{egg} and \textup{eqsat-mod-ac}.}
	\begin{tikzpicture}
	\begin{axis}[
		xbar,
		bar width=8pt,
		width=10cm,
		height=14cm,
		xlabel={Count},
		ytick={1,2,4,5,7,8,10,11,13,14},
		yticklabels={e-classes, e-nodes, e-classes, e-nodes, e-classes, e-nodes, e-classes, e-nodes, e-classes, e-nodes},
		yticklabel style={font=\small},
		extra y ticks={1.5, 4.5, 7.5, 10.5, 13.5},
		extra y tick labels={$h(abcd)$, $h(abc)$, $aba^{-1}b^{-1}$, $abcd$, \shortstack[c]{$a \cdot \ldots \cdot a$\\\scriptsize$10\times$}},
		extra y tick style={
				tick style={draw=none},
				ticklabel style={font=\small, anchor=west, xshift=-9em},
			},
		xmin=0,
		enlarge y limits=0.1,
		nodes near coords,
		nodes near coords style={font=\small},
		legend pos=north east,
		grid=major,
		grid style={opacity=.3},
		]

		\addplot[fill=orange!60] coordinates {(30,1) (91,2) (14,4) (24,5) (6,7) (13,8) (5,10) (5,11) (2,13) (2,14)};
		\addplot[fill=blue!40] coordinates {(30,1) (119,2) (14,4) (34,5) (18,7) (70,8) (16,10) (55,11) (11,13) (47,14)};
		\legend{eqsat-mod-ac, egg}

	\end{axis}
\end{tikzpicture}
	\label{fig:egraph-sizes}
\end{figure}

\begin{table}[h]
	\centering
	\caption{Asymptotic e-graph size bounds for selected terms}
	\label{tab:scaling-laws}
	\begin{tabular}{lrrrrl}
		\toprule
		                              & \multicolumn{2}{c}{egg} & \multicolumn{2}{c}{eqsat-mod-ac} &                                            \\
		\cmidrule(lr){2-3} \cmidrule(lr){4-5}
		Term                          & e-nodes                 & e-classes                        & e-nodes  & e-classes & Theory              \\
		\midrule
		$x_1 + x_1 + \ldots + x_1$    & $O(n^2)$                & $O(n)$                           & $O(1)$   & $O(1)$    & AC                  \\
		$x_1 + x_2 + \ldots + x_n$    & $O(2^n)$                & $O(2^n)$                         & $O(n)$   & $O(n)$    & AC                  \\
		$h(x_1 + x_2 + \ldots + x_n)$ & $O(2^n)$                & $O(2^n)$                         & $O(2^n)$ & $O(2^n)$  & AC, homomorphism    \\
		$x \cdot 0$                   & $\infty$                & $\infty$                         & $O(1)$   & $O(1)$    & AC, $x \cdot 0 = 0$ \\
		\bottomrule
	\end{tabular}
\end{table}

\section{Conclusions}
\label{sec:conclusion}

Equality saturation is a powerful method for equational reasoning,
but its practical adoption is limited due to its exponential blow-up.
Our investigation quantifies the effect of AC on classic equality saturation,
formally proving $O(2^n)$ bounds for both e-nodes and e-classes.

By extending relational e-matching with specific indexed relations for AC operators,
we can store terms as multisets and manage their subterms implicitly.
Our approach outperforms traditional e-graphs, using asymptotically less memory.
We observed that the memory savings are theory-dependent.
In the worst case, our approach still requires an exponential amount of memory.

Further research is required to better understand the border at which our approach breaks down
and to find remedies such that e-graphs become scalable.



\bibliographystyle{ACM-Reference-Format}
\bibliography{references}




\end{document}